%%%%%%%%%%%%%%%%%%%%%%%%%%%%%%%%%%%%%%%%%%%%%%%%%%%%%%%%%%%%%%%%%%%%%%%%%%%%%%%%
%2345678901234567890123456789012345678901234567890123456789012345678901234567890
%        1         2         3         4         5         6         7         8

\documentclass[letterpaper, 10 pt, conference]{ieeeconf}  % Comment this line out if you need a4paper
\usepackage{gen_settings}

\IEEEoverridecommandlockouts                              % This command is only needed if 
                                                          % you want to use the \thanks command

\overrideIEEEmargins                                      % Needed to meet printer requirements.

%In case you encounter the following error:
%Error 1010 The PDF file may be corrupt (unable to open PDF file) OR
%Error 1000 An error occurred while parsing a contents stream. Unable to analyze the PDF file.
%This is a known problem with pdfLaTeX conversion filter. The file cannot be opened with acrobat reader
%Please use one of the alternatives below to circumvent this error by uncommenting one or the other
%\pdfobjcompresslevel=0
%\pdfminorversion=4

% See the \addtolength command later in the file to balance the column lengths
% on the last page of the document

% The following packages can be found on http:\\www.ctan.org
%\usepackage{graphics} % for pdf, bitmapped graphics files
%\usepackage{epsfig} % for postscript graphics files
%\usepackage{mathptmx} % assumes new font selection scheme installed
%\usepackage{times} % assumes new font selection scheme installed
%\usepackage{amsmath} % assumes amsmath package installed
%\usepackage{amssymb}  % assumes amsmath package installed

\title{\LARGE \bf
Disturbance Bounds for Signal Temporal Logic Task Satisfaction: \\
A Dynamics Perspective
}

\author{Prithvi Akella, Aaron D. Ames$^{1}$% <-this % stops a space
\thanks{*This work was supported by AFOSR}% <-this % stops a space
\thanks{$^{1}$All authors are with the California Institute of Technology
        {\tt\small pakella@caltech.edu, ames@caltech.edu}}%
}

\begin{document}

\maketitle
\thispagestyle{empty}
\pagestyle{empty}

%%%%%%%%%%%%%%%%%%%%%%%%%%%%%%%%%%%%%%%%%%%%%%%%%%%%%%%%%%%%%%%%%%%%%%%%%%%%%%%%
\begin{abstract}
This letter offers a novel approach to Test and Evaluation of pre-existing controllers from a control barrier function and dynamics perspective.  More aptly, prior Test and Evaluation techniques tend to require \textit{apriori} knowledge of a space of allowable disturbances.  Our work, however, determines a two-norm disturbance-bound rejectable by a system's controller \textit{without requiring specific knowledge of these disturbances beforehand}.  The authors posit that determination of such a disturbance bound offers a better understanding of the robustness with which a given controller achieves a specified task - as motivated through a simple, linear-system example.  Additionally, we show that our resulting disturbance bound is accurate through simulation of $1000$ randomized trials in which a Segway-controller pair successfully satisfies its specification despite randomized perturbations within our identified bound.

\end{abstract}

%%%%%%%%%%%%%%%%%%%%%%%%%%%%%%%%%%%%%%%%%%%%%%%%%%%%%%%%%%%%%%%%%%%%%%%%%%%%%%%%
\section{INTRODUCTION}
While there exist multiple temporal logic formalisms, two of increasing interest in the controls community are Linear Temporal Logic and Signal Temporal Logic~\cite{baader2003description,baier2008principles,donze2010robust}.  This interest arises as these logical schemes offer succinct ways of expressing complex, desired behavior, while also providing necessary and sufficient criteria by which to determine if a system has achieved this behavior~\cite{vardi1996automata, gerth1995simple, donze2010robust, maler2004monitoring}.  As a result, there has been significant work utilizing these specification schemes and associated satisfaction criteria to develop optimization-based control schemes that enforce satisfaction of these behavioral specifications~\cite{wolff2014optimization,wongpiromsarn2011tulip,lindemann2018control,lindemann2019control,lindemann2020barrier}.  Additionally, these formalisms and satisfaction criteria have also prompted the development of evaluation schemes to test a controllers ability to realize these desired system behaviors when experiencing environmental disturbances~\cite{Althoff2018,annpureddy2011s,tuncali2016utilizing,fainekos2012verification,dreossi2019verifai,gangopadhyay2019identification,ghosh2018verifying}.  Finally, the authors note that there has also been significant work aimed at developing controllers that robustly reject these environmental disturbances, most recently with active disturbance rejection control~\cite{han2009pid,huang2014active,gao2006active,sun2007comments}.

However, this leads to a question we aim to explore in this work.  As mentioned prior, existing work in the Test and Evaluation community endeavors to test and evaluate a controller's ability to realize desired system specifications while subject to environmental disturbances.  These procedures typically amount to an optimization problem over the feasible space of these disturbances, requiring identification of the allowable disturbances beforehand~\cite{corso2020survey}.  As such, the authors posit that it might be more fruitful were we to identify the level of disturbance that a given controller can reject as opposed to determining the worst-case disturbance from a given set.  More accurately, \textit{can we use a system model and model-theoretic control techniques to identify a two-norm disturbance-bound that our controller can reject whilst still satisfying its incumbent specification?}

\spacing
\newidea{Our Contribution:}  Our contribution is twofold.  First, we construct two optimization problems that each generate two-norm disturbance-bounds rejectable by a system's controller while it steers its system to satisfy its specification.  Each optimization problem focuses on a specific subset of Signal Temporal Logic, and we use their solutions to construct our system-level bound.  Secondly, we show that our generated bound is accurate.  Over $1000$ simulated Segway runs where disturbances are sampled randomly from within our prescribed norm-bound, we show that the Segway-controller pair rejects disturbances within our identified bound and achieves its Signal Temporal Logic task. For context, the subset of STL tasks studied in the sequel is consistent with prior works in the controls literature~\cite{lindemann2018control,lindemann2019control,lindemann2019decentralized}.

\spacing
\newidea{Organization:}  Section~\ref{sec:probform} details some background material in Subsection~\ref{sec:prelims}, motivates our problem in Subsection~\ref{sec:motiv_example}, and formally states our problem in Subsection~\ref{sec:probstate}. Then, Section~\ref{sec:contributions} details our main contributions - the optimization problems determining two-norm disturbance-bounds rejectable by a system's controller.  Finally, Section~\ref{sec:examples} illustrates our results through a simulated Segway example.

\section{Problem Formulation}
\label{sec:probform}
This section will detail some necessary background material for the sequel - specifically Signal Temporal Logic and Control Barrier Functions.  We will start with some notation.

\spacing
\newidea{Notation:} $\|\cdot\|$ is the 2-norm over $\mathbb{R}^n$.  $\mathbb{R}_{+} = \{x \in \mathbb{R}~|~x \geq 0\}$, $\mathbb{R}_{++} = \{x \in \mathbb{R}~|~x > 0\}$.  A function $f: \mathbb{R}^n \to \mathbb{R}$ is Lipschitz continuous if and only if $\exists~L \in \mathbb{R}_{+}$ such that $|f(x) - f(z)| \leq L \|x -z \|$.  A continuous function $\alpha \in \classkapextinf$ if and only if $\alpha: (-\infty,\infty) \to \mathbb{R}$, $\alpha(0) = 0$, $r>s$ implies $\alpha(r) > \alpha(s)$, and $\lim_{r \to \infty}~\alpha(r) = \infty$.  For any continuously differentiable function $h:\mathbb{R}^n \to \mathbb{R},$ $a \in \mathbb{R}$ is a regular value if and only if $D_xh(x) \neq 0~\forall~x \suchthat h(x) = a$.  The space of all signals $\signalspace = \{s~|~s:[0,T] \to \mathbb{R}^n,~\forall~T > 0\}$ with $s$ a signal. $\|\cdot\|_{[a,b]}$ is an induced (semi)-norm over $\signalspace$ where $\|s\|_{[a,b]} = \max_{t \in [a,b]}\|s(t)\|$ for $s \in \signalspace$.
\subsection{Preliminaries}
\label{sec:prelims}
In this section, we will provide a brief description of Signal Temporal Logic and Control Barrier functions - two topics that are necessary for the sequel.  Afterwards, we will motivate the specific problem under study with an example.

\spacing
\newidea{Signal Temporal Logic:} Signal Temporal Logic (STL) is a language by which rich, time-varying system behavior can be succinctly expressed.  This language is based on predicates $\mu \in \mathcal{A}$ which are boolean-valued variables taking a truth value for each state $x$.  Predicates $\mu$ and specifications $\psi$ are defined as follows, with $"|"$ demarcating definitions:
\begin{gather}
    \label{eq:predicate_def}
    \mu(x) = \true \iff h_{\mu}(x)\geq 0,~h_{\mu}: X \to \mathbb{R}, \\
    \label{eq:spec}
    \psi \triangleq \phi | \neg \psi | \psi_1 \lor \psi_2 | \psi_1 \wedge \psi_2 | \psi_1 \until_{[a,b]} \psi_2,~ \psi \in \mathbb{S}.
\end{gather}
Here, $\psi_1,\psi_2$ are specifications themselves, and $\psi_1 \until_{[a,b]} \psi_2$ reads as: $\psi_1$ should be true at time $t = a$ and should continue to be true until $\psi_2$ is true, which should be true by some time $t \leq b$ \cite{baier2008principles,donze2010robust}.
Finally, $\mathbb{S}$ is the set of all STL specifications.

We write $(s,t')\models \psi$ when a signal $s$ satisfies a specification $\psi$ for times $t \geq t'$.  To be brief, will refrain from formally defining the satisfaction relation $\models$, as we will instead note that every STL specification $\psi$ has a robustness measure $\rho$ that is positive for signals $s$ that satisfy $\psi$.
\begin{definition}
\label{def:robustness}
A function $\rho: \signalspace \times \mathbb{R}_+ \to \mathbb{R}$ is a \textit{robustness measure} for a Signal Temporal Logic specification $\psi$ if it satisfies the following equivalency:
\begin{equation}
    \rho(s,t) \geq 0 \iff (s,t) \models \psi.    
\end{equation}
\end{definition}
\noindent For a more comprehensive definition of the satisfaction relation, please see Section 2.2 in~\cite{maler2004monitoring}.  Finally, to simplify notation, two commonly used temporal logic operators will be produced here.  The first is $\F_{[a,b]} \psi$ which reads as $\psi$ should be true \textit{at some point in the future} for some time $t \in [a,b]$.  The second is $\G_{[a,b]} \psi$ which reads as $\psi$ should be true \textit{for all times} $t \in [a,b]$.  In both cases, $b>a$.
\begin{equation}
    \F_{[a,b]}\psi = \true \until_{[a,b]} \psi, ~ \G_{[a,b]}\psi = \neg\left(\true \until_{[a,b]} \neg \psi\right).
\end{equation}

\spacing
\newidea{Control Barrier Functions:} Originally inspired by their counterparts in optimization (see Chapter 3 of \cite{forsgren2002interior}), control barrier functions are a modern control tool used to ensure safety in safety-critical systems that are control-affine, \textit{i.e.},
\begin{equation}
    \label{eq:base_system}
    \dot x = f(x) + g(x) u, \quad x \in \mathcal{X} \subseteq, \mathbb{R}^n,~u\in\mathcal{U}\subseteq \mathbb{R}^m.
\end{equation}
We will assume we have a feedback controller $k(x)$ for~\eqref{eq:base_system}, which results in the following closed-loop dynamics:
\begin{equation}
    \label{eq:nom_sys}
    \dot{x} = f_{cl}(x) \triangleq f(x) + g(x)k(x),~x \in \mathcal{X}.
\end{equation}
Now, solutions to~\eqref{eq:nom_sys} may not exist for all time~\cite{verhulst2006nonlinear}.  As such, we denote this interval of existence of solutions to~\eqref{eq:nom_sys} emanating from $x_0$ as $I(x_0)=[0,t_{\max}]$.  We denote the corresponding solution as $\phi_t(x_0)$, where
\begin{equation}
    \label{eq:solution}
    \dot{\phi}_t(x_0) = f_{cl}\left(\phi_t(x_0) \right),\quad \phi_0(x_0) = x_0.
\end{equation}
Then, forward invariance is defined as follows.
\begin{definition}
The set $\mathcal{C} \subset \mathbb{R}^n$ is forward invariant with respect to the dynamical system~\eqref{eq:nom_sys} if $\forall~x_0 \in \mathcal{C}$, $\phi_t(x_0) \in \mathcal{C}~\forall~t \in I(x_0)$, with $\phi_t(x_0)$ as per~\eqref{eq:solution}.
\end{definition}

\begin{figure}
    \centering
    \hspace{-0.15 cm}\includegraphics[width = 0.49\textwidth]{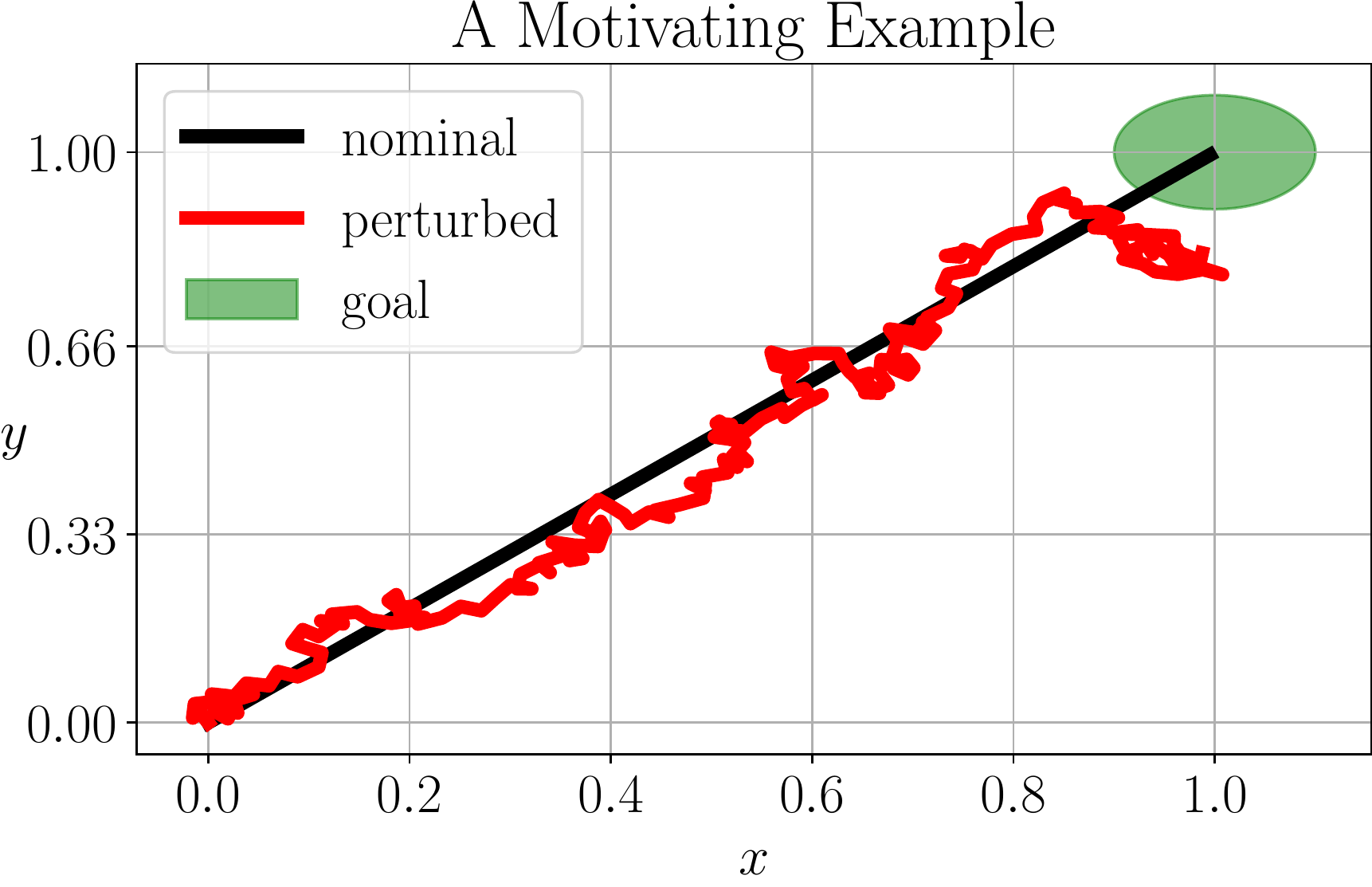}
    \vspace{-0.7 cm}
    \caption{The motivating example detailed in Section~\ref{sec:motiv_example} for this paper's problem.  For the closed-loop system shown, the undisturbed trajectory (black) satisfies its specification - reach the goal (green) within $2$ seconds.  However, disturbing the same system results in a trajectory (red) that fails to satisfy this specification.  This phenomenon prompted the authors to ask the question - \textit{can we determine the two-norm disturbance-bound that a controller can reject while steering a system to satisfy its specification?}}
    \vspace{-0.5 cm}
    \label{fig:example_trajectory}
\end{figure}

Control barrier functions then, are a tool used to ensure forward invariance of their $0$-superlevel sets.  Specifically, for a continuously differentiable function $h: \mathbb{R}^n \to \mathbb{R}$, define its $0$-superlevel set $\mathcal{C}$ and boundary $\partial \mathcal{C}$  as follows:
\begin{equation}
    \label{eq:zerosuplevel}
    \mathcal{C}= \{x\in\mathcal{X}~|~h(x) \geq 0\},~\partial \mathcal{C} = \{x\in\mathcal{X}~|~h(x) = 0\}.
\end{equation}
Then, the definition of control barrier functions is as follows.
\begin{definition} (Adapted from Definition 5 in~\cite{ames2016control})
\label{def:continuous_cbf}
For the control-affine system~\eqref{eq:base_system}, a continuously differentiable function $h:\mathbb{R}^n \to \mathbb{R}$ with $0$ a regular value is a \textit{control barrier function} if $\exists~\alpha\in\classkapextinf$ such that $\forall~x \in \mathcal{X}$,
\begin{equation}
    \label{eq:cbf_criteria}
    \sup_{u \in \mathcal{U}}~\left[\dot{h}(x,u) \triangleq \frac{\partial h}{\partial x}\left( f(x) + g(x)u\right)\right] \geq -\alpha(h(x)).
\end{equation}
\end{definition}

This ends our brief overview of necessary topics.  The next section motivates the specific problem under study.

\subsection{A Motivating Example}
\label{sec:motiv_example}
To better motivate our problem statement, we will provide a brief example.  Consider the following single integrator system subject to an STL specification $\psi$ with associated robustness measure $\rho$ and with $g = [0.75,0.75]^T$:
\begin{gather}
    \label{eq:exsingleint}
    \dot x = u,~x \in [-1,1]^2,~u \in [-0.5,0.5]^2, \\
    \mu_g(x) = \true \iff \left(h_{\mu}(x) \triangleq 0.1 - \|x-g\|_2\right) \geq 0, \\
    \psi = \F_{[0,2]}\mu_g,~
    \rho(s,0) \triangleq \max_{t \in [0,2]}~h_{\mu}\left(s(t)\right).
    \label{eq:example_spec}
\end{gather}
It is fairly simple to construct a controller $U$ that ensures that $(\phi(\mathbf{0}),0)\models \psi$, where $\phi(\mathbf{0}) \in \signalspace$ is the closed-loop solution of~\eqref{eq:exsingleint} and this controller $U$ starting from $x_0 = \mathbf{0}$.  Figure~\ref{fig:example_trajectory} shows an example controller and resulting trajectory $\phi(\mathbf{0})$.  Indeed, this controller also ensures that $\rho(\phi(\mathbf{0}),0) = 0.09$, indicating that this controller robustly steers the system to satisfy $\psi$.  However, if we introduce some disturbance to the system, as shown via the red trajectory in the same figure, the system fails to satisfy $\psi$.  As a result, the controller is not as robust as once claimed.  It is for this reason that we aim to develop techniques to discern the level of robustness - in a two-norm sense - that a controller can reject while still ensuring STL specification satisfaction.  Such techniques would provide a better understanding of the efficacy of a controller in robustly realizing a required task.  With this motivation in mind, we will formalize our problem statement.

\subsection{Problem Statement}
\label{sec:probstate}
We will start by mentioning two, separate systems - our nominal controlled system and its perturbed version.
\begin{align}
        \dot x & = f(x) + g(x)U(x)\triangleq f_{cl}(x), & & \hspace{-0.075 in}x \in \mathcal{X},~U:\mathcal{X} \to \mathcal{U},\quad~~~\label{eq:CL}\tag{CL} \\
        \dot x & = f_{cl}(x) + d, & &\hspace{-0.075 in}d \in \mathbb{R}^n. \label{eq:perturbed} \tag{CL-d}
\end{align}
For both closed-loop systems~\eqref{eq:CL} and~\eqref{eq:perturbed}, we will assume $f,g,U$ are locally Lipschitz continuous.  This implies $\forall~x_0 \in \mathcal{X}$ that solutions $\phi(x_0)$ to~\eqref{eq:CL} and $\phi^d(x_0)$ to~\eqref{eq:perturbed} have nonzero intervals of existence $I(x_0)$ and $I^d(x_0)$ respectively~\cite{verhulst2006nonlinear}.  Furthermore, we will denote $\phi(x_0) \in \signalspace$ to be the state trajectory signal and $\phi_t(x_0) \in \mathcal{X}$ to be the state at time $t$ as per equation~\eqref{eq:solution}.

We will also assume that this system is subject to an STL specification that is of the following form:
\begin{equation}
    \label{eq:probspec}
        \begin{gathered}
            \omega = \true~|~\mu~|~\neg \mu~|~\omega_1 \wedge \omega_2, \\
            \psi = \G_{[a,b]}\omega~|~\F_{[a,b]}\omega~|~\omega_1\until_{[a,b]}\omega_2~|~\psi_1\wedge\psi_2.
        \end{gathered}
\end{equation}
Additionally, we will make the following two assumptions about the predicate functions $h_{\mu}$ and the robustness measures $\rho$ used in our forthcoming analysis.
\begin{assumption}
\label{assump:predicate_func}
The predicate functions $h_{\mu}$ are continuously differentiable.
\end{assumption}
\begin{assumption}
\label{assump:robustness_lip}
The robustness measures $\rho$ for our signal temporal logic specifications $\psi$ are partially Lipschitz continuous, \textit{i.e.} $\exists~L,b\geq 0$ such that,
\begin{equation}
    |\rho(s,0) - \rho(z,0)| \leq L \|s-z\|_{[0,b]},
\end{equation}
where $\|\cdot\|_{[0,b]}$ is an induced (semi)-norm over $\signalspace$.
\end{assumption}
Here, we note that our restriction to this specific subclass of STL specifications aligns with prior work coupling Signal Temporal Logic and control barrier functions (see the examples in ~\cite{lindemann2018control,lindemann2019control,lindemann2019decentralized,lindemann2020barrier}).  We will also make one fairness assumption - that the intervals of existence for solutions to either system~\eqref{eq:CL} or~\eqref{eq:perturbed} are sufficiently large enough to permit analysis as to whether they satisfy their STL specification.  We will also state one definition to formalize what we mean when we say a system satisfies a specification.
\begin{definition}
\label{def:satisfactionnotation}
We say~\eqref{eq:CL} satisfies a specification $\psi$ over the space $X$, \textit{i.e.} \eqref{eq:CL} $\models_X \psi$ if and only if,
\begin{equation}
    \forall~x \in X,~ (\phi(x),0) \models \psi.
\end{equation}
\end{definition}
Then our problem statement is as follows.
\begin{problem}
Let $\psi$ be a Signal Temporal Logic specification of the form in~\eqref{eq:probspec}.  Determine a space $X \subseteq \mathcal{X}$ and a disturbance bound $\delta_d$ such that $\eqref{eq:perturbed}\models_X \psi~\forall~d \suchthat \|d\|\leq \delta_d$.
\end{problem}

In the sequel, the following definition of $P(\omega)$ for specifications $\omega$ as defined in equation~\eqref{eq:probspec} will be useful:
\begin{equation}
    \label{eq:truthset}
    \begin{gathered}
    \mu \in \omega \iff \left( \omega(x) = \true \implies \mu(x) = \true\right), \\
    P(\omega) = \{\mu \in \mathcal{A}~|~\mu \in \omega\}.
    \end{gathered}
\end{equation}
This results in the following Lemma.
\begin{lemma}
\label{lem:equivalency}
The following equivalency holds:
\begin{equation}
    \omega(x) \equiv \left( \wedge_{\mu \in P(\omega)}~\mu(x)\right).
\end{equation}
\end{lemma}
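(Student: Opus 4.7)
The plan is to proceed by structural induction on the grammar defining $\omega$ in~\eqref{eq:probspec}, namely $\omega = \true \mid \mu \mid \neg \mu \mid \omega_1 \wedge \omega_2$. One direction is essentially free: whenever $\omega(x) = \true$, the definition of $P(\omega)$ in~\eqref{eq:truthset} forces $\mu(x) = \true$ for every $\mu \in P(\omega)$, so the right-hand conjunction holds. The substantive content lies in the converse direction, and that is precisely what the structural induction bookkeeps.

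For the base cases, $\omega = \true$ yields a vacuously- or tautologically-true conjunction that matches $\omega$; and $\omega = \mu$ clearly satisfies $\mu \in P(\mu)$, so the conjunction is at most as strong as $\mu$ itself, which hands back the reverse implication. The inductive step, $\omega = \omega_1 \wedge \omega_2$, rests on the elementary containment
\[
P(\omega_1) \cup P(\omega_2) \subseteq P(\omega_1 \wedge \omega_2),
\]
valid because any predicate implied by one conjunct is a fortiori implied by their conjunction. If $\bigwedge_{\mu \in P(\omega)} \mu(x)$ holds, then in particular $\bigwedge_{\mu \in P(\omega_i)} \mu(x)$ holds for each $i$; applying the induction hypothesis to each $\omega_i$ then gives $\omega_i(x) = \true$, and hence $\omega(x) = \true$.

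The principal obstacle, and essentially the only subtle point, is the $\omega = \neg \mu$ base case: as written, $P(\omega)$ in~\eqref{eq:truthset} ranges only over the positive atoms of $\mathcal{A}$, so if $\mathcal{A}$ contains no predicate equivalent to $\neg \mu$, the conjunction collapses to $\true$ rather than to $\neg \mu$. I would resolve this by invoking the standard STL convention (implicit in the related work~\cite{lindemann2018control,lindemann2019control}) that $\mathcal{A}$ is closed under negation, associating to each $\mu$ with predicate function $h_\mu$ a companion $\bar\mu \in \mathcal{A}$ with $h_{\bar\mu} = -h_\mu$. Under this convention $\bar\mu \in P(\neg \mu)$ and the $\omega = \mu$ argument transfers verbatim, completing the induction.
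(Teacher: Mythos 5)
Your proof is correct and is, in substance, a rigorous expansion of the paper's one-line argument ("follows by definition of $\omega$ and $P(\omega)$"): both amount to unpacking~\eqref{eq:probspec} and~\eqref{eq:truthset}, with your structural induction simply doing the bookkeeping the paper leaves implicit. The added value of your version is that it surfaces a real gap in the lemma as stated: for $\omega = \neg\mu$ the set $P(\neg\mu)$ contains only those atoms of $\mathcal{A}$ implied by $\neg\mu$, so without a closure assumption on $\mathcal{A}$ the conjunction can be strictly weaker than $\neg\mu$ (e.g.\ it degenerates to $\true$ when no such atom exists), and the equivalence fails. The paper never addresses this; in its later use of the lemma (equation~\eqref{eq:first_setting} and Theorem~\ref{thm:global_0on}) $\omega$ is effectively treated as a conjunction of positive atoms, so the omission is harmless there but the lemma as written is stronger than what is proved. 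One caveat on your proposed repair: taking the companion atom $\bar\mu$ with $h_{\bar\mu} = -h_{\mu}$ gives $\bar\mu(x) = \true \iff h_{\mu}(x) \leq 0$, which agrees with $\neg\mu(x) \iff h_{\mu}(x) < 0$ only off the boundary $\{x \suchthat h_{\mu}(x) = 0\}$; so strictly speaking $\bar\mu \in P(\neg\mu)$ but the conjunction still over-approximates $\neg\mu$ on that set. You would need either to exclude the boundary or to state the closure assumption with the strict-inequality semantics to make the $\neg\mu$ case exact. Everything else — the easy forward direction, the base cases $\true$ and $\mu$, and the inductive step via $P(\omega_1)\cup P(\omega_2) \subseteq P(\omega_1\wedge\omega_2)$ — is sound.
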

\begin{proof}
Follows by definition of $\omega$~\eqref{eq:probspec} and  $P(\omega)$~\eqref{eq:truthset}.
\end{proof}

This ends our formal problem statement.  We will now move to detailing our main contributions.

\section{Main Contribution}
\label{sec:contributions}
This section will be a series of optimization problems designed to identify spaces $X$ and norm bounds $\delta_d$ such that $\eqref{eq:perturbed}\models_X\psi$ for any STL specification $\psi$ satisfying equation~\eqref{eq:probspec}.  We will start first with an optimization problem for specifications $\psi = \G_{[0,b]}\omega$.  We do so as these types of specifications admit a time-independent solution worth noting.  As a result, our setting for the first optimization problem is as follows, with sets $\mathcal{C}_{h_{\mu}}$ as per equation~\eqref{eq:zerosuplevel}:
\begin{equation}
    \label{eq:first_setting}
        \psi = \G_{[0,b]}\omega,\quad \mathcal{C}_{\omega} = \mathcal{X} \bigcap_{\mu \in P(\omega)}~\mathcal{C}_{h_{\mu}}.
\end{equation}
We will also define a feasible disturbance set $\Delta$ as follows:
\begin{equation}
    \begin{gathered}
        \xi(x,e,\mu) = \frac{\partial h_{\mu}}{\partial x}^T(x)f_{cl}(x) - \left\|\frac{\partial h_{\mu}}{\partial x}(x)\right\|e,\\
        \Delta(x,\mu,\alpha_{\mu}) = \left\{e \in \mathbb{R}~|~\xi(x,e,\mu) \geq -\alpha_{\mu}\left(h_{\mu}(x)\right)\right\}.
    \end{gathered}
\end{equation}
Then our proposed optimization problem determines an $\omega$-specific  bound $\delta^0_d$ over $\mathcal{C}_{\omega}$ such that $\eqref{eq:perturbed} \models_{\mathcal{C}_{\omega}}\psi$ \textit{i.e.},
\begin{equation}
    \label{eq:candidate_bound_Gbase}
    \begin{aligned}
        \delta^0_d =&  \min_{x \in \mathcal{C}_{\omega}}~\max_{e \in \mathbb{R}} & & e, \\
        & \subjectto & & e \in \Delta(x,\mu,\alpha_{\mu}),~\forall~\mu \in P(\omega).
    \end{aligned}
\end{equation}
The formal statement of this theorem will follow.
\begin{theorem}
\label{thm:global_0on}
For equation~\eqref{eq:candidate_bound_Gbase}, let each $\alpha_{\mu} \in \classkapextinf$, let the specification $\psi$ and set $\mathcal{C}_{\omega}$ satisfy equation~\eqref{eq:first_setting}, and let each predicate function $h_{\mu}$ satisfy Assumption~\ref{assump:predicate_func}. Then,
\begin{equation}
    \delta^0_d \geq 0 \implies \eqref{eq:perturbed} \models_{\mathcal{C}_{\omega}} \psi~\forall~d \suchthat \|d\|\leq \delta^0_d.
\end{equation}
\end{theorem}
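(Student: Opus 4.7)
The plan is to reduce the theorem to a standard control-barrier-function forward invariance argument applied to each predicate's zero-superlevel set. First, because $\psi = \G_{[0,b]}\omega$, the satisfaction relation $(\phi^d(x_0),0)\models\psi$ is equivalent to $\omega(\phi^d_t(x_0))=\true$ for every $t\in[0,b]$; by Lemma~\ref{lem:equivalency} this is in turn equivalent to $h_\mu(\phi^d_t(x_0))\geq 0$ for every $\mu\in P(\omega)$ and every $t\in[0,b]$. Hence it suffices to prove that, whenever $x_0\in\mathcal{C}_\omega$ and $\|d\|\leq\delta^0_d$, each set $\mathcal{C}_{h_\mu}$ with $\mu\in P(\omega)$ is forward invariant along solutions of~\eqref{eq:perturbed}, since $\mathcal{C}_\omega$ is their intersection with $\mathcal{X}$.

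Second, I would differentiate $h_\mu$ along~\eqref{eq:perturbed} and estimate the disturbance term by Cauchy--Schwarz, obtaining
\begin{equation*}
\dot h_\mu(x) = \frac{\partial h_\mu}{\partial x}^{T}(x)\bigl(f_{cl}(x)+d\bigr) \geq \frac{\partial h_\mu}{\partial x}^{T}(x) f_{cl}(x) - \left\|\frac{\partial h_\mu}{\partial x}(x)\right\|\|d\| = \xi(x,\|d\|,\mu).
\end{equation*}
So whenever the scalar $\|d\|$ lies in $\Delta(x,\mu,\alpha_\mu)$ for every $x\in\mathcal{C}_\omega$ and every $\mu\in P(\omega)$, the CBF-style inequality $\dot h_\mu(x)\geq-\alpha_\mu(h_\mu(x))$ holds on $\mathcal{C}_\omega$ along~\eqref{eq:perturbed}. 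Invoking the standard forward invariance result for control barrier functions (Definition~\ref{def:continuous_cbf} and the argument behind~\cite{ames2016control}) then yields forward invariance of each $\mathcal{C}_{h_\mu}$, which is exactly the reduction needed.

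Third, I would explain why $\|d\|\leq\delta^0_d$ delivers this feasibility uniformly in $x$. Since $\xi(x,e,\mu)$ is affine and strictly decreasing in $e$, each $\Delta(x,\mu,\alpha_\mu)$ is a half-line of the form $(-\infty,\bar e(x,\mu)]$, so feasibility is monotone in $e$: any $e'\in[0,e]$ is also feasible. The inner maximization in~\eqref{eq:candidate_bound_Gbase} therefore returns $\min_{\mu\in P(\omega)} \bar e(x,\mu)$, and the outer minimization yields $\delta^0_d=\min_{x\in\mathcal{C}_\omega}\min_{\mu\in P(\omega)} \bar e(x,\mu)$; combined with the hypothesis $\delta^0_d\geq 0$, this means $\|d\|\leq\delta^0_d$ forces $\|d\|\in\Delta(x,\mu,\alpha_\mu)$ at every $x\in\mathcal{C}_\omega$ and every $\mu\in P(\omega)$, closing the loop.

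The main obstacle I foresee is making the CBF invariance step watertight, given that~\eqref{eq:candidate_bound_Gbase} enforces the inequality only on $\mathcal{C}_\omega$ rather than on all of $\mathcal{X}$. The resolution is the classical Nagumo-type argument: a trajectory of~\eqref{eq:perturbed} starting in $\mathcal{C}_{h_\mu}$ can exit only by crossing a boundary point where $h_\mu(x)=0$, and at any such point the derived inequality gives $\dot h_\mu(x)\geq -\alpha_\mu(0)=0$, prohibiting the exit. Thus the trajectory remains in $\mathcal{C}_\omega$ throughout $[0,b]$, the inequality keeps applying along the trajectory, and the equivalence established in the first paragraph delivers $\eqref{eq:perturbed}\models_{\mathcal{C}_\omega}\psi$ as claimed.
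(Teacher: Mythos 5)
Your argument tracks the paper's proof through its first half: the same Cauchy--Schwarz bound on $\frac{\partial h_\mu}{\partial x}^T(x)(f_{cl}(x)+d)$, the same reduction of $\psi=\G_{[0,b]}\omega$ to positivity of every $h_\mu$ along the perturbed flow via Lemma~\ref{lem:equivalency}. Your third step --- observing that $\xi(x,e,\mu)$ is non-increasing in $e$, so each $\Delta(x,\mu,\alpha_\mu)$ is downward closed and the min--max value $\delta^0_d$ is feasible uniformly in $x$ and $\mu$ --- is actually \emph{more} explicit than the paper, which simply asserts that $\|d\|\leq\delta^0_d$ yields the CBF-style inequality on $\mathcal{C}_\omega$; this is a genuine improvement in rigor. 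Where you diverge is the invariance step. The paper does not use a boundary argument: it notes that $\dot u=-\alpha_\mu(u)$ has a unique solution for $u_0\geq 0$ by Peano's uniqueness theorem (Theorem 1.3.1 in~\cite{agarwal1993uniqueness}, since $-\alpha_\mu$ is continuous and non-increasing) and then applies the comparison lemma (Lemma 3.4 in~\cite{Khalil}) to the full inequality $\dot h_\mu\geq-\alpha_\mu(h_\mu)$ holding on all of $\mathcal{C}_\omega$, concluding $h_\mu(\phi^d_t(x_0))\geq 0$ directly. That route needs no information about the level set $\{h_\mu=0\}$ itself.

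Your Nagumo-style closing step is where I would push back. As literally written --- ``at any exit point $\dot h_\mu(x)\geq-\alpha_\mu(0)=0$, prohibiting the exit'' --- the pointwise claim does not follow: the scalar function $h(t)=-(t-t^*)^3$ satisfies $h(t^*)=0$ and $\dot h(t^*)=0$ yet becomes negative immediately after $t^*$. Making this airtight requires invoking Nagumo's theorem proper (subtangentiality of $f_{cl}+d$ to $\mathcal{C}_{h_\mu}$ at every boundary point), and translating $\dot h_\mu\geq 0$ on $\{h_\mu=0\}$ into the tangent-cone condition requires $0$ to be a regular value of $h_\mu$. The theorem's hypotheses only impose Assumption~\ref{assump:predicate_func} (continuous differentiability) on the $h_\mu$; the regular-value condition appears in Definition~\ref{def:continuous_cbf} but is not assumed here, so your route needs an extra hypothesis that the paper's comparison-lemma route avoids. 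Either replace the boundary argument with the comparison-lemma argument (as in the proof of Theorem 1 of~\cite{ames2016control}), or state explicitly that you additionally assume $0$ is a regular value of each $h_\mu$ and cite Nagumo's theorem rather than the one-line heuristic.
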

\begin{proof}
To start, for any $d$, Cauchy-Schwarz provides that
\begin{equation}
    \frac{\partial h_{\mu}}{\partial x}^T(x)\left(f_{cl}(x) +d \right) \geq 
    \frac{\partial h_{\mu}}{\partial x}^T(x)f_{cl}(x) - \left\|\frac{\partial h_{\mu}}{\partial x}(x)\right\|\|d\|.
\end{equation}
Then for any $d$ such that $\|d\|\leq \delta^0_d$ we have that the derivative of $h_{\mu}$ with respect to the perturbed dynamics~\eqref{eq:perturbed} satisfies the following inequality reminiscent of the CBF inequality in Definition~\ref{def:continuous_cbf} as $\delta^0_d \geq 0$:
\begin{equation}
    \dot{h}_{\mu}(x,d) \geq -\alpha_{\mu}(h_{\mu}(x)),~\forall~\mu \in P(\omega),~x \in \mathcal{C}_{\omega}.
\end{equation}
Via Peano's Uniqueness Theorem (Theorem 1.3.1 in~\cite{agarwal1993uniqueness}) we know that $\dot u = -\alpha_{\mu}(u)$ has a unique solution $\forall~u_0 \geq 0$ as $-\alpha_{\mu}$ is a continuous, non-increasing function in $u$.  Using this uniqueness result in conjunction with a Comparison Lemma, Lemma 3.4 in~\cite{Khalil}, allows us to state that
\begin{equation}
    \label{eq:continued_positivity}
    h_{\mu}\left(\phi^d_t(x_0)\right) \geq 0,~\forall~\mu \in P(\omega),~x_0 \in \mathcal{C}_{\omega},~t \in I^d(x_0).
\end{equation}
Here, we note that this chain of logic was also utilized in the proof for Theorem 1 in~\cite{ames2016control} as the proof for Lemma~3.4 in~\cite{Khalil} requires Lipschitz continuity of $\alpha_\mu$ to guarantee a unique solution (see Appendix C.2 in~\cite{Khalil}), and this is already provided for via Peano's Uniqueness Theorem.  As a result, equation~\eqref{eq:continued_positivity} implies that
\begin{equation}
    h_{\mu}\left(\phi^d_t(x_0)\right) \geq 0,~\forall~\mu \in P(\omega),~x_0 \in \mathcal{C}_{\omega},~t \geq 0.
\end{equation}
By definition of $h_{\mu}$ we have that
\begin{equation}
    \eqref{eq:perturbed} \models_{\mathcal{C}_{\omega}}\G_{[0,b]}\left(\wedge_{\mu \in P(\omega)}\mu\right)~\forall~d \suchthat \|d\|\leq \delta^0_d.
\end{equation}
Then by Lemma~\ref{lem:equivalency} and equation~\eqref{eq:first_setting} we have the following:
\begin{equation}
    \eqref{eq:perturbed} \models_{\mathcal{C}_{\omega}}\psi~\forall~d \suchthat \|d\|\leq\delta^0_d. \vspace{-0.625 cm}
\end{equation}
\spacing \spacing
\end{proof}

While this result may seem similar to work regarding Input to State Safe control barrier functions~\cite{kolathaya2018input}, such work tends to enlarge the safe-set to account for disturbances.  As our safe-set - \textit{e.g.} the region where $\mu$ is true - is fixed, we require an analysis that does not enlarge the safe set while still accounting for disturbances, resulting in our Theorem~\ref{thm:global_0on}.

\begin{figure}
    \centering
    \hspace{-0.15 cm}\includegraphics[width = 0.49\textwidth]{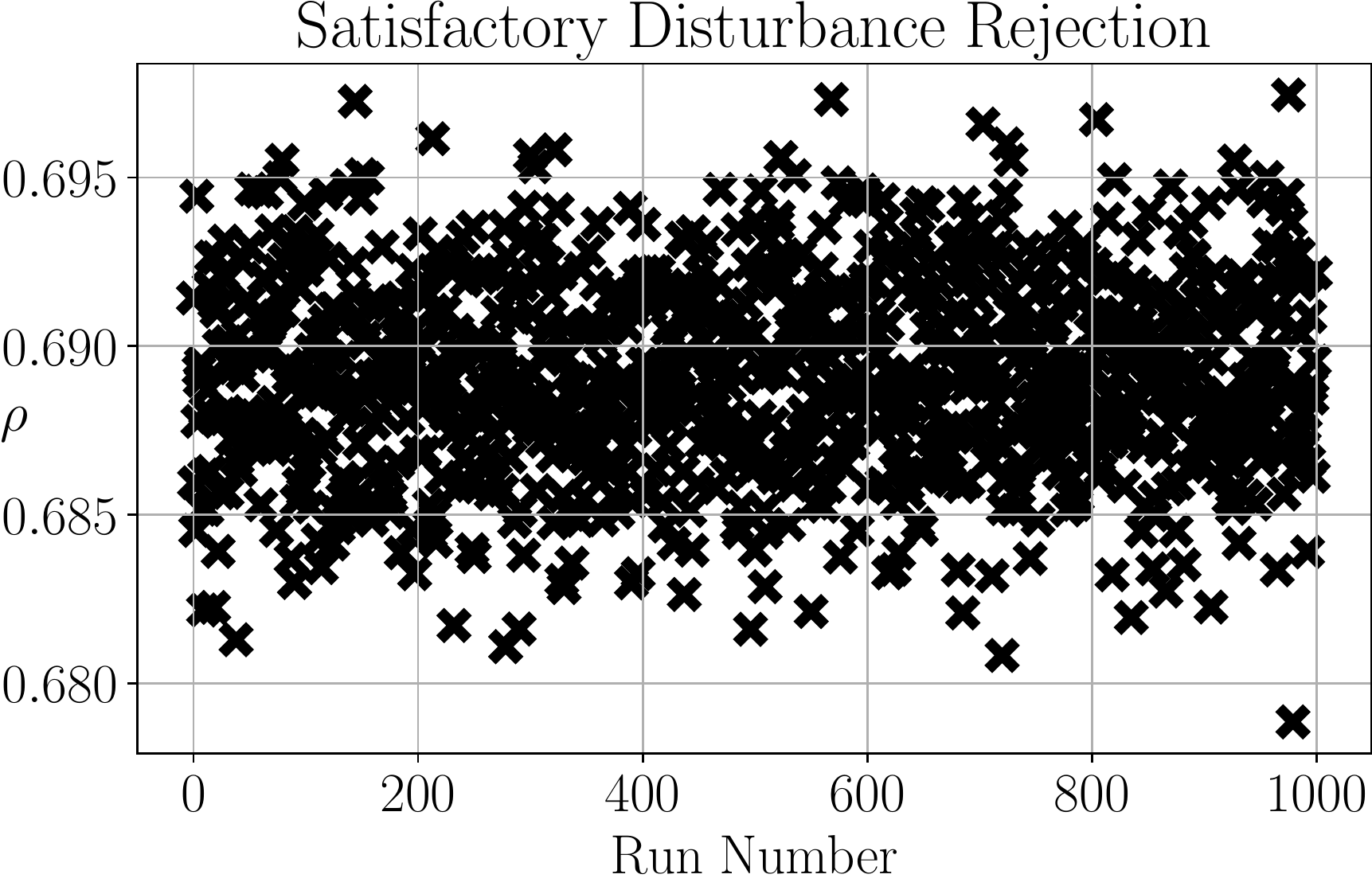}
    \vspace{-0.5 cm}
    \caption{The robustness measure for 1000 trials of the Segway detailed in Section~\ref{sec:examples} when perturbed by randomly distributed disturbances whose two-norm is less than the upper bound calculated by Theorem~\ref{thm:global_0on}, $\delta^0_d = 0.89$.  The robustness measure $\rho$ is for the specification $\psi_2 = \G_{[0,2]}\mu_2$ as per equation~\eqref{eq:example_spec}.  In all cases, the system satisfies its specification as $\rho\left(\phi^d(x_0,0)\right)\geq 0$.  This success indicates that, with high probability, this Segway's LQR controller rejects disturbances whose norm $\|d\| \leq \delta^0_d$.}
    \vspace{-0.5 cm}
    \label{fig:global_repeatability}
\end{figure}

For the second set of optimization problems, we will require the Gronwall-Bellman Inequality.
\begin{theorem}[From Theorem 1.3.1 in~\cite{ames1997inequalities}]
\label{thm:gronwall_bellman}
Let $u,f:J = [\alpha,\beta] \to \mathbb{R}_+$ be continuous over their domain, and let $n: J\to \mathbb{R}_+$ be continuous and non-decreasing.  Then, $\forall~t \in J$
\begin{equation}
    \begin{gathered}
    u(t) \leq n(t) + \int_{\alpha}^t f(x) u(s) ds \implies \\ u(t) \leq n(t) \exp\left(\int_{\alpha}^t f(s) ds \right).
    \end{gathered}
\end{equation}
\end{theorem}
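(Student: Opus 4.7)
The plan is to follow the classical integrating-factor proof of Gronwall--Bellman. First, I would introduce the auxiliary function $v(t) = \int_{\alpha}^{t} f(s) u(s) \, ds$, which is continuously differentiable on $J$ with $v(\alpha) = 0$ and $v'(t) = f(t) u(t)$. The hypothesis then reads $u(t) \leq n(t) + v(t)$, and since $f \geq 0$ this gives the linear differential inequality
\begin{equation}
    v'(t) - f(t) v(t) \leq f(t) n(t).
\end{equation}

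Next, I would multiply by the integrating factor $E(t) = \exp\bigl(-\int_{\alpha}^{t} f(s) \, ds\bigr)$, which is positive and satisfies $E'(t) = -f(t) E(t)$. The left-hand side collapses to the exact derivative $\tfrac{d}{dt}\bigl(v(t) E(t)\bigr)$, so integrating from $\alpha$ to $t$ (using $v(\alpha) = 0$) yields
\begin{equation}
    v(t) E(t) \leq \int_{\alpha}^{t} f(s) n(s) E(s) \, ds.
\end{equation}
Here I would invoke the monotonicity of $n$ to replace $n(s)$ by $n(t)$ inside the integrand, pull $n(t)$ out, and then observe that $\int_{\alpha}^{t} f(s) E(s) \, ds = -\int_{\alpha}^{t} E'(s) \, ds = 1 - E(t)$ by direct antidifferentiation.

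Putting the pieces together, $v(t) E(t) \leq n(t)\bigl(1 - E(t)\bigr)$, so dividing by $E(t) > 0$ gives $v(t) \leq n(t)\bigl(E(t)^{-1} - 1\bigr)$, and substituting back into $u(t) \leq n(t) + v(t)$ produces the claimed bound $u(t) \leq n(t) \exp\bigl(\int_{\alpha}^{t} f(s) \, ds\bigr)$. The main obstacle, such as it is, lies in using the three regularity hypotheses at precisely the right places: continuity of $u$ and $f$ to guarantee $v$ is $C^1$ and that the integrating-factor manipulation is legitimate, non-negativity of $f$ to preserve the inequality when multiplying through, and monotonicity of $n$ to collapse the final integral. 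The rest is routine first-order linear ODE bookkeeping.
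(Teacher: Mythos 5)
Your argument is correct and complete: the auxiliary function $v(t)=\int_{\alpha}^{t}f(s)u(s)\,ds$, the integrating factor $E(t)=\exp\bigl(-\int_{\alpha}^{t}f(s)\,ds\bigr)$, the use of $f\geq 0$ to preserve the inequality, and the monotonicity of $n$ to pull $n(t)$ out of the final integral are all deployed in the right places, and the identity $\int_{\alpha}^{t}f(s)E(s)\,ds = 1-E(t)$ closes the argument cleanly. Note, however, that the paper does not prove this statement at all -- it imports it verbatim from Theorem 1.3.1 of the cited reference -- so there is no in-paper proof to compare against; your integrating-factor derivation is the standard textbook proof and stands on its own.
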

This theorem allows us to establish the following lemma bounding the difference between solutions to dynamical systems~\eqref{eq:CL} and~\eqref{eq:perturbed}.
\begin{lemma}
\label{lem:solution_continuity}
For both systems~\eqref{eq:CL} and~\eqref{eq:perturbed}, let $f_{cl}$ be locally Lipschitz continuous with constant $L$ for some $x_0 \in \mathcal{X}$.  Then, if $\forall~d,~\|d\|\leq \delta_d$,
\begin{equation}
    \left\|\phi_t(x_0) - \phi^d_t(x_0)\right\| \leq \delta_dt e^{Lt},~\forall~t \in I(x_0) \cap I^d(x_0).
\end{equation}
\end{lemma}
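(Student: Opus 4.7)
The plan is to convert both dynamics~\eqref{eq:CL} and~\eqref{eq:perturbed} into integral form, subtract, apply the Lipschitz bound on $f_{cl}$, and then invoke Theorem~\ref{thm:gronwall_bellman} on the resulting scalar integral inequality. Concretely, for any $t \in I(x_0)\cap I^d(x_0)$ I would write
\begin{equation}
    \phi_t(x_0) = x_0 + \int_0^t f_{cl}\!\left(\phi_s(x_0)\right) ds, \quad \phi^d_t(x_0) = x_0 + \int_0^t f_{cl}\!\left(\phi^d_s(x_0)\right) ds + \int_0^t d\, ds.
\end{equation}

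Next, I would subtract the two expressions, take norms, and use the triangle inequality together with local Lipschitz continuity of $f_{cl}$ to obtain
\begin{equation}
    \left\|\phi_t(x_0) - \phi^d_t(x_0)\right\| \leq \|d\|\, t + L \int_0^t \left\|\phi_s(x_0) - \phi^d_s(x_0)\right\| ds \leq \delta_d t + L \int_0^t \left\|\phi_s(x_0) - \phi^d_s(x_0)\right\| ds.
\end{equation}
Setting $u(s) = \|\phi_s(x_0) - \phi^d_s(x_0)\|$, $n(t) = \delta_d t$, and $f(s) \equiv L$, this has exactly the form required by Theorem~\ref{thm:gronwall_bellman}: $u$ and $f$ are continuous nonnegative functions on $[0,t]$, and $n$ is continuous, nonnegative, and non-decreasing since $\delta_d \geq 0$. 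Applying the inequality yields
\begin{equation}
    u(t) \leq \delta_d t \exp\!\left(\int_0^t L\, ds\right) = \delta_d t\, e^{Lt},
\end{equation}
which is exactly the claim.

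The only delicate point is ensuring the hypotheses of the Gronwall--Bellman inequality are met on the relevant time interval. In particular, one needs $t$ to lie in $I(x_0)\cap I^d(x_0)$ so that both $\phi_s(x_0)$ and $\phi^d_s(x_0)$ remain in a region where the same Lipschitz constant $L$ controls $f_{cl}$; the "local" nature of the Lipschitz hypothesis makes this the main thing to be careful about, though it is directly provided by the statement. Everything else is a routine integral estimate plus one invocation of Theorem~\ref{thm:gronwall_bellman}.
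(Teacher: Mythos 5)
Your proposal is correct and follows essentially the same route as the paper's proof: write the solution difference in integral form, bound it by $\delta_d t + L\int_0^t \left\|\phi_s(x_0)-\phi^d_s(x_0)\right\| ds$ via the Lipschitz property, and apply Theorem~\ref{thm:gronwall_bellman} with $u(t)=\left\|\phi_t(x_0)-\phi^d_t(x_0)\right\|$, $n(t)=\delta_d t$, and $f\equiv L$. Your explicit check of the Gronwall--Bellman hypotheses is a welcome addition that the paper leaves implicit.
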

\begin{proof}
This proof amounts to one application of Gronwall-Bellman's Inequality in Theorem~\ref{thm:gronwall_bellman}.  We can start with the norm difference between solutions which yields the following inequality for some $t \in I(x_0) \cap I^d(x_0)$:
\begin{equation}
    \begin{aligned}
    & \left\|\phi_t(x_0) - \phi^d_t(x_0)\right\| \leq \\
    & \quad \int_{0}^t \left\|f_{cl}\left(\phi_s(x_0)\right) - f_{cl}\left(\phi^d_s(x_0)\right)\right\| ds + \int_{0}^t \|d(s)\|ds.
    \end{aligned}
\end{equation}
By assumption that $f_{cl}$ is locally Lipschitz with constant $L$ and that all $d$ are such that $\|d\|\leq \delta_d$ we have that
\begin{equation}
\underbrace{\left\|\phi_t(x_0) - \phi^d_t(x_0)\right\|}_{u(t)} \leq \underbrace{\delta_d t}_{n(t)} + \int_{0}^t L \underbrace{\left\|\phi_s(x_0) - \phi^d_s(x_0)\right\|}_{u(s)} ds.
\end{equation}
Applying Theorem~\ref{thm:gronwall_bellman} concludes the proof.
\end{proof}

\begin{figure*}[ht]
    \centering
    \hspace{-0.1 cm}\includegraphics[width = \textwidth]{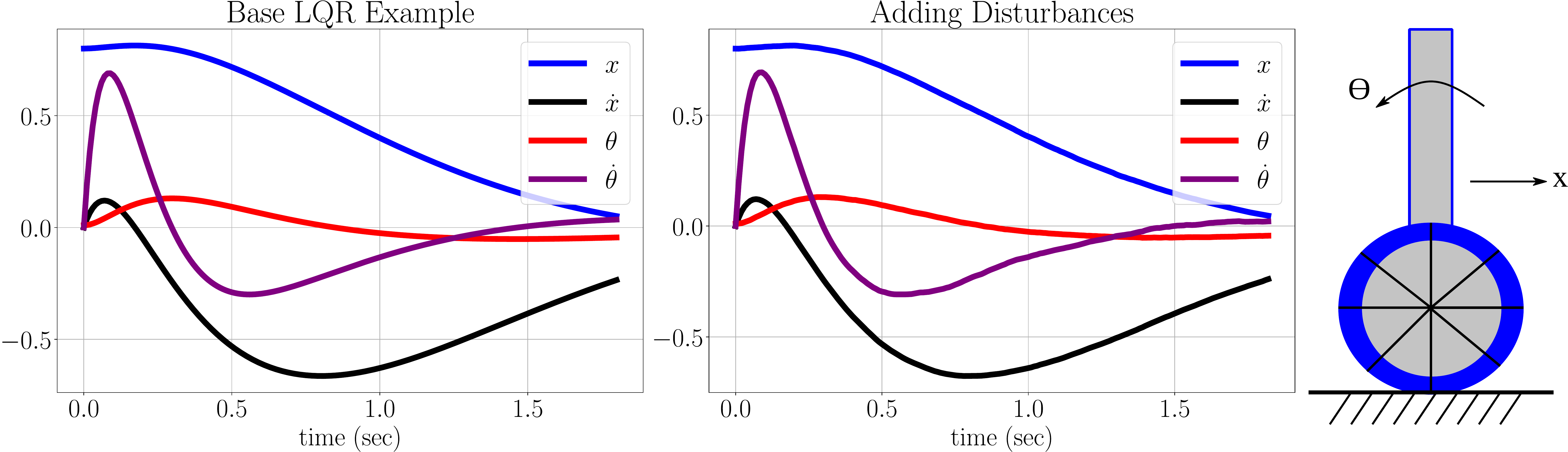}
    \vspace{-0.6 cm}
    \caption{Comparison of a Segway's LQR controller steering the nominal system~\eqref{eq:CL} to the zero point (left) and the disturbed system~\eqref{eq:perturbed} to the same zero point (center).  The example Segway is illustrated to the far right.  Notice that when the Segway undergoes disturbances whose norms are less than the max bound calculated via our procedure $\delta^T_d(\psi) = 0.01$, the specification $\psi$ in equation~\eqref{eq:sim_setup} is still satisfied.}
    \vspace{-0.4 cm}
    \label{fig:sim_results}
\end{figure*}

Our optimization problem for the remainder of the base specification types $\G_{[a,b]}\omega, \F_{[a,b]}\omega, \omega_1 \until_{[a,b]}\omega_2$ will make use of Lemma~\ref{lem:solution_continuity} and Assumption~\ref{assump:robustness_lip} to generate disturbance-bounds $\delta^1_d$ for the entire state space $\mathcal{X}$.  More aptly, our setting is as follows, with $"|"$ demarcating different specifications:
\begin{gather}
    \label{eq:second_setting}
    \psi = \G_{[a,b]}\omega~|~\F_{[a,b]}\omega~|~\omega_1 \until_{[a,b]}\omega_2, \\
    \rho(s,0) \geq 0 \iff (s,0) \models \psi,~
    \Delta_d = \min_{x \in \mathcal{X}}~\rho\left(\phi(x), 0\right). \label{eq:Delta_d}
\end{gather}
Then our theorem identifying a disturbance-bound $\delta_d$ for specifications $\psi$ of the type in equation~\eqref{eq:second_setting} is as follows.
\begin{theorem}
\label{thm:other_specifications}
Let the closed-loop dynamics $f_{cl}$ be locally Lipschitz continuous with constant $L_f~\forall~x_0 \in \mathcal{X}$, let the specification $\psi$ be as per equation~\eqref{eq:second_setting}, and let the robustness measure $\rho$ also satisfy Assumption~\ref{assump:robustness_lip} with Lipschitz constant $L_{\rho}$ and time constant $b$.  If $\Delta_d \geq 0$,
\begin{equation}
    \label{eq:second_cand_bound}
    \eqref{eq:perturbed} \models_{\mathcal{X}}\psi~\forall~d \suchthat \|d\| \leq \frac{\Delta_d}{L_{\rho}b e^{L_{f}b}} \triangleq \delta^1_d.
\end{equation}
\end{theorem}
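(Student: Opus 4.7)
The plan is to chain together the two key ingredients already in place: the solution-continuity bound of Lemma~\ref{lem:solution_continuity}, which controls how far the perturbed trajectory drifts from the nominal trajectory in state-space, and the partial Lipschitz assumption on $\rho$ (Assumption~\ref{assump:robustness_lip}), which translates that state-space drift into a bound on the degradation of the robustness measure. The quantity $\Delta_d$ is precisely the nominal robustness margin we have to play with, so the proof reduces to showing that the worst-case drop in $\rho$ under a disturbance of size $\delta^1_d$ does not exceed $\Delta_d$.

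First I would fix an arbitrary $x_0 \in \mathcal{X}$ and an arbitrary disturbance $d$ with $\|d\| \leq \delta^1_d$, and apply Lemma~\ref{lem:solution_continuity} to obtain the pointwise estimate $\|\phi_t(x_0) - \phi^d_t(x_0)\| \leq \delta^1_d\, t\, e^{L_f t}$ for every $t$ in the common interval of existence. Since the map $t \mapsto t e^{L_f t}$ is monotonically increasing on $\mathbb{R}_+$, taking the maximum over $t \in [0,b]$ yields the induced semi-norm bound
\begin{equation}
    \|\phi(x_0) - \phi^d(x_0)\|_{[0,b]} \;\leq\; \delta^1_d\, b\, e^{L_f b}.
\end{equation}

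Next I would invoke Assumption~\ref{assump:robustness_lip} on the same two signals, giving
\begin{equation}
    \bigl|\rho(\phi(x_0),0) - \rho(\phi^d(x_0),0)\bigr| \;\leq\; L_\rho \, \delta^1_d \, b \, e^{L_f b} \;\leq\; \Delta_d,
\end{equation}
where the final inequality is exactly the definition of $\delta^1_d$. Rearranging and using $\rho(\phi(x_0),0) \geq \Delta_d$, which follows from the definition of $\Delta_d$ in~\eqref{eq:Delta_d}, I obtain $\rho(\phi^d(x_0),0) \geq \Delta_d - \Delta_d = 0$. By Definition~\ref{def:robustness} this is equivalent to $(\phi^d(x_0),0) \models \psi$, and since $x_0 \in \mathcal{X}$ was arbitrary, Definition~\ref{def:satisfactionnotation} yields $\eqref{eq:perturbed} \models_{\mathcal{X}} \psi$.

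I do not expect any serious obstacle here, since both heavy-lifting results (Gronwall--Bellman propagation and the partial Lipschitz property of $\rho$) are already assembled; the only mild subtlety is justifying that the maximum in the semi-norm over $[0,b]$ is legitimately attained, which follows from the fairness assumption that intervals of existence are long enough to cover $[0,b]$. One should also be careful that the Lipschitz constant $L_f$ from the hypothesis matches the constant $L$ used inside Lemma~\ref{lem:solution_continuity}, but this is a pure renaming.
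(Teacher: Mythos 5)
Your proposal is correct and follows essentially the same route as the paper's own proof: apply Lemma~\ref{lem:solution_continuity} to bound the trajectory deviation, take the supremum over $[0,b]$ to get the induced semi-norm bound, feed that into Assumption~\ref{assump:robustness_lip}, and use the definition of $\Delta_d$ in~\eqref{eq:Delta_d} to conclude $\rho(\phi^d(x_0),0)\geq 0$. Your explicit remarks on the monotonicity of $t\mapsto t e^{L_f t}$ and on the fairness assumption covering $[0,b]$ merely spell out steps the paper leaves implicit.
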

\begin{proof}
For this proof, we will assume that our disturbances $d$ are such that $\|d\|\leq M$, and show $M=\delta^1_d$.  As a result, by local Lipschitz continuity of $f_{cl}$ and Lemma~\ref{lem:solution_continuity} we have that $\forall~x_0 \in \mathcal{X}$,
\begin{equation}
    \left\|\phi_t(x_0) - \phi^d_t(x_0)\right\| \leq Mt e^{L_ft},~\forall~t \in I(x_0)\cap I^d(x_0).
\end{equation}
Then as the robustness measure $\rho$ satisfies Assumption~\ref{assump:robustness_lip} with Lipschitz constant $L_{\rho}$ and time constant $b$, we have that $\forall~x_0 \in \mathcal{X}$ and with $\|\cdot\|_{[0,b]}$ the induced signal norm,
\begin{equation}
    \left|\rho\left(\phi(x_0),0\right) - \rho\left(\phi^d(x_0),0\right) \right| \leq L_{\rho} \left\|\phi(x_0) - \phi^d(x_0)\right\|_{[0,b]}.
\end{equation}
Then, by definition of $\|\cdot\|_{[0,b]}$ and our fairness assumption that $b \in I(x_0) \cap I^d(x_0)~\forall~x_0 \in \mathcal{X}$, we have that
\begin{equation}
    L_{\rho} \left\|\phi(x_0) - \phi^d(x_0)\right\|_{[0,b]} \leq L_{\rho}M b e^{L_f b},~\forall~x_0 \in \mathcal{X}.
\end{equation}
As a result, with $M = \Delta_d/(L_{\rho}b e^{L_{f}b})$ we have that
\begin{equation}
\left|\rho\left(\phi(x_0),0\right) - \rho\left(\phi^d(x_0),0\right) \right| \leq \Delta_d,~\forall~x_0 \in \mathcal{X}.
\end{equation}
By definition of $\Delta_d$ and $M$ and the above inequality holding $\forall~x_0 \in \mathcal{X}$, we have that
\begin{equation}
    \rho\left(\phi^d(x_0),0\right) \geq 0,~\forall~x_0,d \suchthat x_0 \in \mathcal{X},~\|d\| \leq \frac{\Delta_d}{L_{\rho}b e^{L_{f}b}}.
\end{equation}
Then the result follows by Definitions~\ref{def:robustness} and~\ref{def:satisfactionnotation}.
\end{proof}

Now it remains to identify a composite disturbance-bound for specifications $\psi = \wedge_i \psi_i$ where each $\psi_i$ is one of the base specification forms already accounted for, \textit{i.e.} $\G_{[0,b]}\omega,\G_{[a,b]}\omega,\F_{[a,b]}\omega,$ or $\omega_1 \until_{[a,b]}\omega_2$.  To do so, we will define an inclusion symbol for specifications.
\begin{gather}
    \psi_i \in \psi \iff \psi = \wedge_i \psi_i,~ P^1(\psi) = \{\psi'~|~\psi' \in \psi\}, \label{eq:spectruthset} \\
    \mathrm{\textit{e.g.}~for}~\psi = \psi_1 \wedge (\psi_2 \wedge \psi_3),~\psi_1,\psi_2,\psi_3 \in \psi.
\end{gather}
This leads to the following lemma similar to Lemma~\ref{lem:equivalency}.
\begin{lemma}
\label{lem:specequivalent}
The following statement holds.
\begin{equation}
    (s,0) \models \psi \iff (s,0) \models \psi',~\forall~\psi' \in P^1(\psi).
\end{equation}
\end{lemma}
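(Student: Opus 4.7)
The plan is to prove Lemma~\ref{lem:specequivalent} by structural induction on how the specification $\psi$ is assembled under conjunction, exploiting the fact that $P^1$ is defined so as to recursively flatten nested conjunctions (as the example $\psi = \psi_1 \wedge (\psi_2 \wedge \psi_3)$ makes explicit). The only nontrivial fact needed is the standard STL semantic equivalence $(s,0) \models \psi_1 \wedge \psi_2 \iff (s,0) \models \psi_1 \text{ and } (s,0) \models \psi_2$, which follows directly from the satisfaction relation in~\cite{maler2004monitoring} and, via Definition~\ref{def:robustness}, is reflected in the usual robustness encoding $\rho_{\psi_1 \wedge \psi_2} = \min(\rho_{\psi_1}, \rho_{\psi_2})$.

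First I would handle the base case, where $\psi$ is one of the four basic forms $\G_{[0,b]}\omega$, $\G_{[a,b]}\omega$, $\F_{[a,b]}\omega$, or $\omega_1 \until_{[a,b]}\omega_2$ listed in~\eqref{eq:probspec}. In that case $\psi$ has no outer conjunctive structure, so $P^1(\psi) = \{\psi\}$ and the biconditional is immediate.

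For the inductive step I would assume $\psi = \psi_1 \wedge \psi_2$, where each $\psi_i$ is drawn from the grammar in~\eqref{eq:probspec} and is strictly simpler than $\psi$ in the sense of having fewer outer conjunctions. The key combinatorial identity to verify from the definition~\eqref{eq:spectruthset} is
\begin{equation}
    P^1(\psi_1 \wedge \psi_2) = P^1(\psi_1) \cup P^1(\psi_2),
\end{equation}
which is a direct consequence of associativity of $\wedge$ in the definition $\psi_i \in \psi \iff \psi = \wedge_i \psi_i$. Combining the semantic equivalence for a single conjunction with the inductive hypothesis applied to each $\psi_i$, I would conclude
\begin{equation}
    (s,0) \models \psi \iff (s,0) \models \psi_i~\forall~i \iff (s,0) \models \psi'~\forall~\psi' \in P^1(\psi_1) \cup P^1(\psi_2),
\end{equation}
which is exactly the claim.

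The only subtle point I expect is justifying the flattening identity rigorously: the statement $\psi_i \in \psi \iff \psi = \wedge_i \psi_i$ is expressed as a bare syntactic condition, and one must interpret it recursively so that $P^1$ picks out the full set of ``leaf'' conjuncts rather than only the top-level ones. Once this reading is pinned down via the author's own example, the induction goes through with no further work, and no new analytic machinery is required beyond the standard STL semantics.
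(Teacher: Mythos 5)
Your proposal is correct and follows essentially the same route as the paper's proof, which likewise rests on the recursive satisfaction semantics for $\wedge$, the definition of $P^1(\psi)$, and associativity of conjunction; you have simply organized those same ingredients into an explicit structural induction with the flattening identity $P^1(\psi_1 \wedge \psi_2) = P^1(\psi_1) \cup P^1(\psi_2)$ made precise. No gap.
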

\begin{proof}
We first note that the satisfaction operator $\models$ is recursively defined for the conjunction operator as follows~\cite{maler2004monitoring}:
\begin{equation}
    \label{eq:recursion_satisfaction}
    (s,0) \models \psi_1 \wedge \psi_2 \iff (s,0) \models \psi_1 \wedge (s,0) \models \psi_2,
\end{equation}
Then the proof follows by equation~\eqref{eq:recursion_satisfaction}, the definition of $P^1(\psi)$ in~\eqref{eq:spectruthset}, the types of specifications $\psi$ as per equation~\eqref{eq:probspec}, and the associativity of the $\wedge$ (and) operator.
\end{proof}

Then our final theorem determines a disturbance-bound $\delta_d$ for specifications $\psi = \wedge_i \psi_i$ where each $\psi_i$ is one of the base specification forms mentioned prior.  We will first pose our optimization problem, then state our theorem.
\begin{gather}
    \delta(\psi) \triangleq \begin{cases}
    \delta^0_d~\mathrm{as~per~}\eqref{eq:candidate_bound_Gbase} & \mbox{if~} \psi~\mathrm{is~as~per~}\eqref{eq:first_setting}, \\
    \delta^1_d~\mathrm{as~per~}\eqref{eq:second_cand_bound} & \mbox{else},
    \end{cases} \\
    \delta^T_d(\psi) = \min_{\psi_i \in P^1(\psi)}~\delta(\psi_i), \label{eq:overalldistbound} \\
    \mathcal{C}_{\psi} = \mathcal{X} \bigcap_{\substack{\psi_i \in P^1(\psi) \suchthat \\ \psi_i~\mathrm{as~per~}\eqref{eq:first_setting}}}\mathcal{C}_{\omega}~\mathrm{as~per~}\eqref{eq:first_setting}. \label{eq:final_set}
\end{gather}
\begin{theorem}
\label{thm:composite}
Let the system's specification $\psi$ satisfy~\eqref{eq:probspec} and let the assumptions for Theorems~\ref{thm:global_0on} and~\ref{thm:other_specifications} hold.  If $\delta^T_d(\psi) \geq 0$ with $\delta^T_d(\psi)$ as per equation~\eqref{eq:overalldistbound}, then
\begin{equation}
    \eqref{eq:perturbed} \models_{\mathcal{C}_{\psi}} \psi~\forall~d \suchthat \|d\|\leq \delta_d^T(\psi). 
\end{equation}
\end{theorem}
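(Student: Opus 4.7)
The plan is to reduce the composite statement to the two base cases already established in Theorems~\ref{thm:global_0on} and~\ref{thm:other_specifications}. By Lemma~\ref{lem:specequivalent}, satisfaction of the conjunction $\psi$ is equivalent to satisfaction of every $\psi_i \in P^1(\psi)$ individually. So it suffices to show that for each conjunct $\psi_i$, each initial condition $x_0 \in \mathcal{C}_{\psi}$, and each disturbance $d$ with $\|d\| \leq \delta^T_d(\psi)$, the perturbed trajectory $\phi^d(x_0)$ satisfies $\psi_i$, and then reassemble.

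Next I would split each conjunct into two cases according to its syntactic form. If $\psi_i = \G_{[0,b]}\omega$ for some $\omega$, i.e.\ $\psi_i$ is of the form in equation~\eqref{eq:first_setting}, then Theorem~\ref{thm:global_0on} provides the bound $\delta^0_d$ over the associated set $\mathcal{C}_\omega$. Otherwise $\psi_i$ is one of $\G_{[a,b]}\omega$, $\F_{[a,b]}\omega$, or $\omega_1 \until_{[a,b]}\omega_2$, and Theorem~\ref{thm:other_specifications} provides the bound $\delta^1_d$ over all of $\mathcal{X}$. By the definition of $\delta(\psi_i)$, each conjunct contributes a valid per-specification disturbance bound together with a domain of validity.

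The composite step is then just bookkeeping. Since $\delta^T_d(\psi) = \min_{\psi_i \in P^1(\psi)} \delta(\psi_i)$, any $d$ with $\|d\| \leq \delta^T_d(\psi)$ automatically satisfies $\|d\| \leq \delta(\psi_i)$ for every $\psi_i$, so each individual theorem's disturbance hypothesis is met simultaneously. On the domain side, equation~\eqref{eq:final_set} builds $\mathcal{C}_\psi$ precisely as $\mathcal{X}$ intersected with the $\mathcal{C}_\omega$ arising from every first-setting conjunct, so $\mathcal{C}_\psi \subseteq \mathcal{X}$ and $\mathcal{C}_\psi \subseteq \mathcal{C}_\omega$ for every relevant $\omega$; in both cases the applicable theorem's domain contains $\mathcal{C}_\psi$. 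Combining these observations with Theorems~\ref{thm:global_0on} and~\ref{thm:other_specifications} yields $(\phi^d(x_0),0) \models \psi_i$ for every $\psi_i \in P^1(\psi)$ and every $x_0 \in \mathcal{C}_\psi$. Reassembling via Lemma~\ref{lem:specequivalent} gives $(\phi^d(x_0),0) \models \psi$, which by Definition~\ref{def:satisfactionnotation} is exactly $\eqref{eq:perturbed} \models_{\mathcal{C}_\psi} \psi$.

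The main obstacle is not a deep technical step but the careful alignment of bounds and domains: verifying that a single minimum $\delta^T_d(\psi)$ simultaneously activates both base theorems, and that the intersection defining $\mathcal{C}_\psi$ is exactly the largest set on which every conjunct's guarantee remains valid. Once these inclusions are stated cleanly, no new analysis beyond Theorems~\ref{thm:global_0on}, \ref{thm:other_specifications}, and Lemma~\ref{lem:specequivalent} is needed.
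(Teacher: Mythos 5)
Your proposal is correct and mirrors the paper's own argument: both split the conjuncts of $\psi$ by syntactic form, invoke Theorem~\ref{thm:global_0on} or Theorem~\ref{thm:other_specifications} on each, use the fact that $\delta^T_d(\psi)$ is a minimum and that $\mathcal{C}_\psi$ is contained in each theorem's domain of validity, and reassemble via Lemma~\ref{lem:specequivalent}. No meaningful differences to report.
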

\begin{proof}
To start, we can assume without loss of generality that there exist zero or more specifications $\psi_i \in P^1(\psi)$ that are of the form in equation~\eqref{eq:first_setting}.  By definition of $\delta^T_d(\psi)$ in equation~\eqref{eq:overalldistbound}, $\mathcal{C}_{\psi}$ in equation~\eqref{eq:final_set}, and Theorem~\ref{thm:global_0on}, we have for each such specification $\psi_i$ (should they exist),
\begin{equation}
    \eqref{eq:perturbed} \models_{\mathcal{C}_{\psi}} \psi_i~\forall~d \suchthat \|d\|\leq \delta^T_d(\psi).
\end{equation}
This follows as if we have two sets $A,B$ such that $A \subset B$, a system $S$, and a specification $\psi$, then by Definition~\ref{def:satisfactionnotation},
\begin{equation}
    S \models_{B} \psi \implies S \models_{A} \psi.
\end{equation}
Then we can also assume without loss of generality that we have zero or more specifications $\psi_j \in P^1(\psi)$ such that $\psi_j$ are not of the form in equation~\eqref{eq:first_setting}.  For each such $\psi_j$, by definition of $\delta^T_d(\psi)$, $\mathcal{C}_{\psi}$, and Theorem~\ref{thm:other_specifications}, we have that
\begin{equation}
    \eqref{eq:perturbed} \models_{\mathcal{C}_{\psi}} \psi_j~\forall~d \suchthat \|d\|\leq \delta^T_d(\psi).
\end{equation}
Then the result holds via Lemma~\ref{lem:specequivalent}.
\end{proof}

This ends the series of optimization problems to determine our disturbance-bounds.  We will now move to showcase these results through a simulated example on a Segway.
\section{Simulated Examples}
\label{sec:examples}
For our example, we aim to determine the robustness with which a Segway's LQR controller achieves two desired performance bounds.  First, the Segway's pendulumn angle is never to deviate too far from the vertical.  Second, the Segway is to reach its goal - its state $\mathbf{x}$ should lie within a norm bounded ball around $0$ - within two seconds.  Mathematically this leads to the following setting:
\begin{gather}
    h_1(\mathbf{x}) = 0.25 - \|\mathbf{x}\|,~h_2(\mathbf{x}) = 10(0.3^2 - \theta^2) - 2\theta\dot\theta,~~ \label{eq:sim_setup}\\
    \mu_i(x) \equiv \left(h_i(x) \geq 0\right),~
    \psi = \F_{[0,2]} \mu_1 \wedge \G_{[0,2]}\mu_2, \label{eq:ex_spec} \\
    \mathcal{X} \subset [-1,1]^2 \times [-0.4,0.4] \times [-1.5,1.5], \\
    \mathbf{x} = [x,v,\theta,\dot \theta]^T \in \mathcal{X} \subset \mathbb{R}^4.
\end{gather}
Figure~\ref{fig:sim_results} shows the Segway setup and example LQR controller steering the Segway to satisfy this specification $\psi$.

To start, it is clear that both predicate functions $h_1,h_2$ in equation~\eqref{eq:sim_setup} satisfy Assumption~\ref{assump:predicate_func}.  Indeed as both are Lipschitz continuous, so to are the associated robustness measures generated from these predicate functions Lipschitz continuous as well, which satisfies Assumption~\ref{assump:robustness_lip}.  As a result, we break our specification into two parts as required of Theorem~\ref{thm:composite} - $\psi_1 = \F_{[0,2]}\mu_1$ and $\psi_2 = \G_{[0,2]}\mu_2$.  This resulted in a $\delta^0_d = 0.89$ after utilizing Theorem~\ref{thm:global_0on} for $\psi_2$ and a $\Delta_d = 0.2$ after utilizing Theorem~\ref{thm:other_specifications} for $\psi_1$.

Figure~\ref{fig:global_repeatability} shows the results of $1000$ randomized trials of the Segway undergoing disturbances $d$ such that $\|d\|\leq\delta^0_d = 0.89$.  As can be seen, the LQR controller realizes a positive robustness measure indicating that the system-controller pair can reject disturbances whose norm is under the bound we identify through our procedure.  Additionally, under the assumption that our Segway's closed-loop dynamics $f_{cl}$ are Lipschitz continuous with constant $L_f \leq 1$ and knowing the associated robustness measure $\rho$ for $\mu_1$ as per~\eqref{eq:ex_spec} is Lipschitz continuous with $L_{\rho} = 1$, Theorem~\ref{thm:other_specifications} provides a secondary disturbance-bound $\delta^1_d = 0.01$.  As per Theorem~\ref{thm:composite} this indicates that our Segway should satisfy its overall specification $\psi$ if its disturbance $d$ is such that $\|d\| \leq \delta^T_d(\psi) = 0.01$.  Indeed the Segway does satisfy its specification after $1000$ randomized runs when perturbed by normally distributed disturbances $d$ such that $\|d\| \leq 0.01$.  One such run is shown in Figure~\ref{fig:sim_results}.

\section{Conclusion}
In this paper, we constructed a series of optimization problems to determine the level of disturbance - in a two-norm sense - that a given system's controller can reject while satisfying its operational Signal Temporal Logic specification.  Additionally, we showed that our optimization problems generate reasonable disturbance-bounds through simulating a Segway whose dynamics are perturbed by disturbances whose two-norm is less than our calculated bound.  Future work aims to decrease the conservativeness of our calculated bounds and extend the class of specifications capable of being analyzed by our approach.

\bibliographystyle{IEEEtran}
\bibliography{IEEEabrv,bib_works}

\end{document}